\newtheorem{theorem}{Theorem}[section]
\newtheorem{proposition}[theorem]{Proposition}
\newtheorem{corollary}[theorem]{Corollary}
\newtheorem{lemma}[theorem]{Lemma}
\newtheorem{remark}[theorem]{Remark}
\journal{REVSTAT}
\begin{document}

\begin{frontmatter}

\title{An Extension of the \(d\)-Variate FGM Copula with Application}

\author[label1]{Mous-Abou Hamadou\corref{cor1}}
\ead{mhamadou@quinnipiac.edu}

\author[label2]{Martial Longla}
\ead{longla@olemiss.edu}

\affiliation[label1]{organization={Department of Mathematics and Statistics, Quinnipiac University},
            city={Hamden},
            state={CT},
            postcode={06518},
            country={USA}}

\affiliation[label2]{organization={Department of Mathematics, The University of Mississippi},
            city={Oxford},
            state={MS},
            postcode={38655},
            country={USA}}

\cortext[cor1]{Corresponding author.}
\begin{abstract}
 We introduce an extended \(d\)-variate Farlie–Gumbel–Morgenstern (FGM) copula that incorporates additional parameters based on Legendre polynomials to enhance the representation of multivariate dependence structures. Within an i.i.d. framework, we derive closed-form estimators for these parameters and establish their unbiasedness, consistency, and asymptotic normality. A simulation study illustrates the finite-sample performance of the estimators. The model is applied to the \texttt{Bearing dataset}, previously studied by Ota and Kimura (2021) \cite{OtaKimura2021} through a  \(d\)-variate FGM copula and by  Longla and Mous-Abou (2025)\cite{LonglaMousAbou2025} using an extended bivariate FGM copula. Our analysis shows that the classical \(d\)-variate FGM copula does not adequately represent the dependence in this dataset. Based on estimation results and model selection criteria, we propose a reduced version of the extended model as a more appropriate copula specification for the \texttt{Bearing data.}

\end{abstract}
\begin{keyword}
 Extended FGM copula; Parameter Estimation; Central Limit Theorem; Copula Applications; Multivariate Dependence; Bearing Reliability.
\end{keyword}

\end{frontmatter}

\section{Introduction}\label{intro}
Introduced by Sklar (1959) \cite{Sklar1959}, copulas are multivariate cumulative distribution functions with uniform marginals on the unit interval. They have become indispensable tools for capturing and modeling dependence structures among multivariate random variables, independently of their marginal distributions. Recent applications of copulas in fields such as economics, finance, risk management, and reliability engineering can be found in Long and Emura (2014) \cite{LongEmura2014}, Sun et al. (2020) \cite{Sun2020}, Ota and Kimura (2021) \cite{OtaKimura2021}, and Longla and Mous-Abou (2025) \cite{LonglaMousAbou2025}. Additional references on copula applications can be found in Bhati and Do (2019) \cite{BhatiDo2019}, whereas a thorough introduction and detailed theoretical background on copulas is provided by Nelsen (2006) \cite{Nelsen} and Joe (2014) \cite{Joe2014}.

A well-known parametric family of copulas is the FGM family, originally introduced through contributions by Eyraud (1936) \cite{Eyraud1936}, Morgenstern (1956) \cite{morgenstern1956}, Farlie (1960) \cite{farlie1960performance}, and Gumbel (1960) \cite{gumbel1960bivariate}. Copulas from this family are defined as polynomial perturbations of the independence copula. This polynomial structure facilitates closed-form expressions for key analytical quantities, including the density function, moments, and dependence measures, making the FGM family particularly appealing for practical applications. However, copulas from this family have a narrow dependence range, with Spearman’s Rho limited to \(\bigl[-1/3, 1/3]\) making it difficult to captuire complex dependencies. Various extensions have been proposed to overcome this limitation, including but not limited to those by Huang and Kotz (1999) \cite{HuangKotz1999}, Lai and Xie (2000) \cite{LaiXie2000}, Bairamov and Kotz (2002) \cite{BairamovKotz2002}, Rodríguez-Lallena (2004) \cite{RodriguezLallena2004}, Amblard and Girard (2009) \cite{AmblardGirard2009}, Durante et al. (2013) \cite{Durante2013}, Komorník et al. (2017) \cite{Komornik2017}, Côté and Genest (2019) \cite{CoteGenest2019} and  Bekrizadeh (2022) \cite{Bekrizadeh2022}. More recently, Longla (2024) \cite{Long2024} introduced an extension of the bivariate FGM copula by incorporating an additional parameter, thereby extending the dependence range to \(\bigl[-1/5, 2/5\bigr]\). 

Building on this, the present paper proposes an extension of the \(d\)-variate FGM copula. Originally examined by Joe and Xu (1996) \cite{JoeXu1996} for parameter estimation, the \(d\)-variate FGM copula was further studied by Ota and Kimura (2021) \cite{OtaKimura2021}, who developed an  estimation algorithm using the Inference Function for Margins (IFM) method. Their approach proved to be effective and serves as a practical alternative to MLE, which is nearly infeasible in this context due to the complexity of the parameter constraints and the number of parameters. They demonstrated the utility of their method by applying it to the \texttt{Bearing dataset} (see J. Lee et al. (2007) \cite{Lee2007}).

However, Longla and Mous-Abou (2025) \cite{LonglaMousAbou2025} highlighted limitations of the classical \(d\)-variate FGM copula when applied to the \texttt{Bearing dataset}. In the absence of a multivariate extension, they conducted a pairwise analysis using the extended bivariate FGM copula of Longla (2024) \cite{Long2024}. This approach aimed to assess the significance of additional parameters introduced in the extended form. Their results demonstrated that the majority of the additional parameters were statistically significantly different from zero, thereby indicating that the classical 
\(d\)-variate FGM copula fails to adequately capture the dependence structure present in the dataset. Nevertheless, their study did not incorporate a fully multivariate copula model, nor did it offer a complete alternative to address the identified shortcomings.

The proposed extended \(d\)-variate FGM copula introduces additional parameters that account for more complex dependence patterns. This multivariate formulation enables a more comprehensive assessment of model fit and facilitates the selection of appropriate copula specifications tailored to the dependence structure observed in the \texttt{Bearing dataset}.
\subsection{Definitions, Conventions and Notations}\label{definitionsandnotations}
A \(d\)-variate copula is a function \(C : [0,1]^d \to [0,1]\) satisfying:\begin{enumerate}
\item \(
C(u_1, \dots, u_d) =\left\{\begin{array}{lr}
0 & \text{if } u_j = 0 \text{ for some } j=1,\cdots,d,
 \\
 u_j & \text{if }u_k = 1 \text{ for all } k \neq j, k=1,\cdots, d.
\end{array}\right.\)
\item \(C\) is  \(d\)-increasing on \([0,1]^d\). In other terms , for any hyper-rectangle \(\prod_{j=1}^d [u_{j1}, u_{j2}] \subset [0,1]^d\), where \(0 \le u_{j1} < u_{j2} \le 1\), the C-volume satisfies  
\[
\sum_{i_1=1}^2 \cdots \sum_{i_d=1}^2 (-1)^{i_1 + \cdots + i_d} C(u_{1i_1}, \dots, u_{d i_d}) \ge 0.
\]
\end{enumerate}
The \(d\)-variate FGM copula, denoted by \(C_d^{FGM}\) , is defined as follows (see Nelsen, 2006 \cite{Nelsen}; Johnson and Kotz, 1995 \cite{JohnsonKott1975}):
\begin{equation}\label{classicalFGM}C^{FGM}_d(u_1,\cdots,u_d;\Lambda)=\prod\limits_{j=1}^d u_i\bigg(1+\sum\limits_{k=2}^d\sum\limits_{1\leq j_1<\cdots<j_k\leq d}\lambda_{j_1,\cdots,j_k}\prod\limits_{i=1}^k(1-u_{j_i})\bigg),\end{equation}
where \((u_1,\cdots,u_d)\in[0,1]^d\) and \(\lambda_{j_1,\cdots,j_k}\in \Lambda\) are parameters defining the dependence structure, subject to the constraint:\[1+\sum\limits_{k=2}^d\sum\limits_{1\leq j_1<\cdots< j_k\leq d}\lambda_{j_1\cdots j_k}\varepsilon_{j_1}\cdots\varepsilon_{j_k}\geq 0,\quad \{\varepsilon_{j_1},\dots, \varepsilon_{j_k}\}\in \{-1,1\}^d.\]

Throughout this paper, the set \(D = \{1, 2, \dots, d\}\) denotes the index and for any positive integer \(j\), we write \(\mathcal{M}_j(D)\) to represent the collection of all subsets of \(D\) with cardinality \(j\), and \(|M|\) denotes the cardinality of a set \(M\). Indicator functions are denoted by \(\mathbf{1}_A\), which equals 1 if the condition \(A\) is true and 0 otherwise. To maintain consistency in expressions involving empty products or subsets, we adopt the convention that \(0^0 = 1\). Unless otherwise specified, all vectors are treated as column vectors, and the transpose of a vector \(x\) is denoted by \(x^\top\). We use \(\mathbf{0}\) to denote the zero vector of the appropriate dimension.

\subsection{Structure of the paper}\label{structure}
The remainder of the paper is organized as follows. Section \ref{thedvariateFGMcopula} introduces the extended \(d\)-variate FGM copula and presents its defining properties. Section \ref{paramterestimation} develops closed-form estimators for the model parameters and establishes their asymptotic properties. A simulation study assessing the finite-sample performance of the estimators is presented in Section \ref{simulationstudy}. Section \ref{casestudy} applies the proposed model to the \texttt{Bearing dataset} and compares it to the classical \(d\)-variate FGM copula. 

\section{The \(d\)-Variate Extended FGM Copula}\label{thedvariateFGMcopula}
Consider a random vector \( \textbf{U}=(U_1, \dots, U_d)\) following a \(d\)-variate extended FGM copula with uniform marginals on \([0,1]\).  The joint distribution function of the \(d\)-variate extended FGM copula is given by
\begin{eqnarray}\label{copfunction}
C(u_1, \dots, u_d)&\hspace{-0.3cm} = &\mathbb{P}[U_1\leq u_1,\cdots,U_d\leq u_d]\nonumber\\&\hspace{-0.3cm}=&\prod_{j=1}^d u_j + \sum_{k=1}^2 \sum_{j=2}^d \sum_{M \in \mathcal{M}_j(D)} \lambda^{(k)}_M \, \prod_{m \notin M} u_m \, \prod_{m \in M} \Phi_k(u_m),~~~
\end{eqnarray} \[\text{where }(u_1,\cdots,u_d)\in[0,1]^d, \Phi_1(x)=\sqrt{3}(x^2-x) \text{ and } ~\Phi_2(x)=\sqrt{5}(2x^3-3x^2+x) .\] Moreover, the parameters \(\lambda^{(k)}_M\) for \(k=1,2\)  satisfy the constraint
\[
\sum_{j=2}^d \bigg[\sqrt{3}^{j} \sum_{M \in \mathcal{M}_j(D)}  \,|\lambda^{(1)}_M| + \sqrt{5}^{|j|}\sum_{M \in \mathcal{M}_j(D)}  \, |\lambda^{(2)}_M|\bigg]  \leq 1.
\]
The corresponding joint density is then given by
\begin{equation}\label{ExtendedDensity}
c(u_1, \dots, u_d) = 1 + \sum_{k=1}^2 \sum_{j=2}^d \sum_{M \in \mathcal{M}_j(D)} \lambda^{(k)}_M \, \prod_{m \in M} \phi_k(u_m).
\end{equation} Where  
\[\phi_k(x)=\tfrac{\partial \Phi_k(x)}{\partial x}, \text{ for }~k=1,2.\]

 Each parameter represents a joint moment of the transformed variables, capturing dependence in the direction of the associated orthogonal function over the subset of variables indexed by \(M\). In particular, when the subset has two elements, the parameter equals the correlation between those variables.

The \(d\)-variate extended FGM copula includes \(\sum\limits_{j=2}^d 2\binom{d}{j} = 2^{d+1} - 2d - 2\) parameters. 

For instance, when \(d = 4\), the model involves 22 parameters, and the copula density is given by
\begin{eqnarray}\label{extendedFGMcopulaDensity}
    c(u_1,u_2,u_3,u_4)= 1 + \sum_{k=1}^2 \mathbf{v}^\top_k\Lambda^{(k)} .
\end{eqnarray}
Where  for each \(k\), 
\begin{eqnarray}\label{Lambdak}
\Lambda^{(k)}&=&(
\lambda^{(k)}_{\{1,2\}},
\lambda^{(k)}_{\{1,3\}},
\lambda^{(k)}_{\{1,4\}},
\lambda^{(k)}_{\{2,3\}},
\lambda^{(k)}_{\{2,4\}},
\lambda^{(k)}_{\{3,4\}},
\lambda^{(k)}_{\{1,2,3\}},
\lambda^{(k)}_{\{1,2,4\}},
\lambda^{(k)}_{\{1,3,4\}},\nonumber\\&&
\lambda^{(k)}_{\{2,3,4\}},
\lambda^{(k)}_{\{1,2,3,4\}})^\top,
\end{eqnarray}
and 
\begin{eqnarray}\label{vk}
\mathbf{v}_k &=& \Big( \phi_k(u_1)\phi_k(u_2),\; \phi_k(u_1)\phi_k(u_3),\; \phi_k(u_1)\phi_k(u_4),\; \phi_k(u_2)\phi_k(u_3), \nonumber\\
&&\quad \phi_k(u_2)\phi_k(u_4),\; \phi_k(u_3)\phi_k(u_4),\; \phi_k(u_1)\phi_k(u_2)\phi_k(u_3),\; \phi_k(u_1)\phi_k(u_2)\phi_k(u_4), \nonumber\\
&&\quad \phi_k(u_1)\phi_k(u_3)\phi_k(u_4),\; \phi_k(u_2)\phi_k(u_3)\phi_k(u_4),\; \phi_k(u_1)\phi_k(u_2)\phi_k(u_3)\phi_k(u_4) \Big)^\top.\nonumber\\
\end{eqnarray}
\begin{remark}\label{subvector}  
If \((U_1, \dots, U_d)\) follows a copula with density given by \eqref{ExtendedDensity} and \(P\subseteq D\) with \(|P|\geq 2\), then the density of the distribution of the subvector \({\bf U}=(U_{s})_{s\in P}\) is given by  
\begin{equation}  
\label{densitysubvector}  
c({\bf u}) = 1 + \sum_{k=1}^2 \sum_{j=2}^{|P|} \sum_{M \in \mathcal{M}_j(P)} \lambda^{(k)}_M \prod_{m \in M} \phi_k(u_m).  
\end{equation}
\end{remark}  
This follows by first obtaining the marginal distribution of the subvector, which is done by setting the unused components to one. Since \(\Phi_k(1) = 0\), all terms involving those components vanish. The resulting expression is then differentiated with respect to the remaining variables to yield the marginal density.

\section{ Parameter Estimation and Asymptotic Distribution}\label{paramterestimation}
This section develops estimators for the parameters of the 
\(d\)-variate extended FGM copula in \eqref{ExtendedDensity} and establishes their asymptotic properties. For \(k, r, z \in \{1, 2\}\), we define

\[\mathcal{I}_{r,z}=\int_0^1 \phi_r(u) \phi_z(u) \, du \quad and \quad \mathcal{I}_{k, r, z} = \int_0^1 \phi_k(u) \phi_r(u) \phi_z(u) \, du.\]

\begin{lemma}\label{lemma1} Due to the orthogonality of the set \(\{1,\phi_1(x),\phi_2(x)\}\) for \(x\in[0,1]\), it's easy to see that \(\mathcal{I}_{r,z}=
\mathbf{1}_{\{r=z\}}\) and 
\begin{equation}\label{integrals} 
\mathcal{I}_{k, r, z} =
\begin{cases} 
\tfrac{2\sqrt{5}}{7}, & k = r = z = 2, \\
\tfrac{2}{\sqrt{5}}, &(k, r, z) \in \{(1,1,2), (1,2,1), (2,1,1)\}, \\
0, & \text{otherwise}.
\end{cases}
\end{equation}
\end{lemma}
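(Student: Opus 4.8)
The plan is to exploit that \(\{1, \phi_1, \phi_2\}\) are precisely the first three normalized shifted Legendre polynomials on \([0,1]\), so that the quadratic relation \(\mathcal{I}_{r,z} = \mathbf{1}_{\{r=z\}}\) is just their orthonormality, while the cubic relations reduce to elementary polynomial integrals once a parity argument eliminates most cases. First I would substitute \(t = 2u - 1\), mapping \([0,1]\) onto \([-1,1]\) with \(du = dt/2\), under which \(\phi_1(u) = \sqrt{3}\,t\) and \(\phi_2(u) = \tfrac{\sqrt{5}}{2}(3t^2 - 1)\); the first is an odd function of \(t\) and the second is even. The relation \(\mathcal{I}_{r,z} = \mathbf{1}_{\{r=z\}}\) then follows at once from \(\int_{-1}^1 t^{2n}\,dt = \tfrac{2}{2n+1}\) together with \(\int_{-1}^1 t\,dt = 0\).

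For the cubic integrals \(\mathcal{I}_{k,r,z}\), I would first record the parity of the integrand in \(t\): the product \(\phi_k\phi_r\phi_z\) is odd precisely when the number of indices equal to \(1\) is odd. Hence every triple containing exactly one or exactly three indices equal to \(1\), namely the multisets \(\{1,2,2\}\) and \(\{1,1,1\}\) together with all their orderings, integrates to zero over the symmetric interval; this accounts for the ``otherwise'' branch. It then remains only to evaluate the two even cases. For the multiset \(\{1,1,2\}\) I would write \(\phi_1^2\phi_2 = \tfrac{3\sqrt{5}}{2}(3t^4 - t^2)\) and integrate term by term; for \(\{2,2,2\}\) I would expand \((3t^2-1)^3 = 27t^6 - 27t^4 + 9t^2 - 1\) and proceed likewise, using the moment formula above.

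Carrying out the arithmetic, the \(\{1,1,2\}\) case gives \(\tfrac{3\sqrt{5}}{4}\bigl(3\cdot\tfrac{2}{5} - \tfrac{2}{3}\bigr) = \tfrac{2}{\sqrt{5}}\), which by the symmetry of the integrand in its three factors equals \(\mathcal{I}_{k,r,z}\) for each of the orderings \((1,1,2),(1,2,1),(2,1,1)\); the \(\{2,2,2\}\) case gives \(\tfrac{5\sqrt{5}}{16}\cdot\tfrac{32}{35} = \tfrac{2\sqrt{5}}{7}\). These match the stated values.

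I do not anticipate a genuine obstacle: the result is a finite, explicit computation, and the only points requiring care are tracking the normalization constants \(\sqrt{3},\sqrt{5}\) and the Jacobian \(\tfrac{1}{2}\) from the change of variables, and invoking the full invariance of \(\mathcal{I}_{k,r,z}\) under permutation of \((k,r,z)\) so that a single ordered computation settles each multiset. One could alternatively bypass the parity bookkeeping by expanding each product \(\phi_r\phi_z\) in the orthonormal basis \(\{1,\phi_1,\phi_2,\phi_3,\phi_4\}\) and reading off \(\mathcal{I}_{k,r,z}\) as the coefficient of \(\phi_k\), but the parity route is shorter here since only two nonzero values need to be produced.
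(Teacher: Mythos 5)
Your proof is correct: the identification of \(\phi_1(u)=\sqrt{3}(2u-1)\) and \(\phi_2(u)=\sqrt{5}(6u^2-6u+1)\) as the normalized shifted Legendre polynomials, the substitution \(t=2u-1\), the parity argument for the vanishing cases, and both arithmetic evaluations (\(\tfrac{3\sqrt{5}}{4}(\tfrac{6}{5}-\tfrac{2}{3})=\tfrac{2}{\sqrt{5}}\) and \(\tfrac{5\sqrt{5}}{16}\cdot\tfrac{32}{35}=\tfrac{2\sqrt{5}}{7}\)) all check out. The paper offers no proof at all, merely asserting the values as ``easy to see'' from orthogonality, so your argument simply fills in the direct computation the authors had in mind, with the parity observation disposing of the ``otherwise'' branch more cleanly than case-by-case integration on \([0,1]\) would.
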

~\\

Suppose that \((U_1, \dots, U_d)\) follow a copula with density given by \eqref{ExtendedDensity}. Let \( P, Q \subseteq D \) be two subsets such that \( 2 \leq |P| \leq |Q| \), and define  
\begin{equation}\label{E}  
E^{z,r}_{P,Q} = \mathbb{E} \bigg[ \prod_{s \in P} \phi_z(U_s) \prod_{t \in Q} \phi_r(U_t) \bigg].  
\end{equation}  
 We establish the following result. 
\begin{proposition}\label{calculus} For \((U_1, \ldots, U_d)\) following a copula with density given in \eqref{ExtendedDensity}, and for sets \( P \) and \( Q \), with \( E^{z,r}_{P,Q} \) as defined in \eqref{E}. The following holds:  
\begin{enumerate}
\item 
\begin{equation}\label{1stequation}
\mathbb{E}\big[\prod\limits_{s \in P}\phi_z(U_s)\big]
=\lambda^{(z)}_P
.\end{equation}   
\item If \(P \subseteq Q\),
\begin{eqnarray}\label{2ndequation}  
E^{z,r}_{P,Q}&=&\mathcal{I}_{z,r}\mathbf{1}_{\{P=Q\}}+\sum\limits_{k=1}^2\sum\limits_{j=2}^{\lvert Q\lvert} \sum\limits_{M \in \mathcal{M}_j(Q)} \lambda^{(k)}_{M}\mathcal{I}_{k,r,z}^{|M\cap P|}\mathcal{I}_{k,r}^{|(Q \setminus P)\cap M|}.
\end{eqnarray}
\item If \(P\not\subset Q\), \begin{eqnarray}\label{3rdequation}
E^{z,r}_{P,Q}=\sum\limits_{k=1}^2\sum\limits_{j=2}^{\lvert P\cup Q\lvert} \sum\limits_{M \in \mathcal{M}_j(P \cup Q)} \lambda^{(k)}_{M}\big[\mathcal{I}_{k,r,z}^{|M\cap P \cap Q|}\mathcal{I}_{k,z}^{|(P \setminus Q)\cap M|}\mathcal{I}_{k,r}^{|(Q \setminus P)\cap M|}.\end{eqnarray}
\end{enumerate}
\end{proposition}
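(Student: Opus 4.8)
The plan is to compute every expectation by integrating the relevant product of basis functions against the density \eqref{ExtendedDensity} and then exploiting the orthogonality relations of Lemma \ref{lemma1}. Writing
\[
E^{z,r}_{P,Q}=\int_{[0,1]^d}\prod_{s\in P}\phi_z(u_s)\prod_{t\in Q}\phi_r(u_t)\,c(u_1,\dots,u_d)\,du_1\cdots du_d,
\]
I would first substitute \eqref{ExtendedDensity} and, by linearity, split the integral into the contribution of the constant term \(1\) and a sum of contributions indexed by the pairs \((k,M)\). In each case the integrand is a product of functions each depending on a single coordinate, so Fubini's theorem factors every one of these \(d\)-dimensional integrals into a product of \(d\) univariate integrals, one per index \(i\in D\). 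This reduces the whole problem to bookkeeping: for each coordinate one decides which of \(\phi_z\), \(\phi_r\), \(\phi_k\) appear and evaluates the univariate integral using \(\int_0^1\phi_k(u)\,du=0\) (orthogonality to the constant), \(\int_0^1\phi_r(u)\phi_z(u)\,du=\mathcal I_{r,z}=\mathbf 1_{\{r=z\}}\), and \(\int_0^1\phi_k(u)\phi_r(u)\phi_z(u)\,du=\mathcal I_{k,r,z}\).

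Part (1) is the base case and fixes the pattern. Here the integrand carries only the factors \(\phi_z(u_s)\) for \(s\in P\). A coordinate in \(P\setminus M\) contributes \(\int_0^1\phi_z=0\), and a coordinate in \(M\setminus P\) contributes \(\int_0^1\phi_k=0\); hence every \((k,M)\) term vanishes unless \(M=P\). On that surviving term each coordinate of \(P=M\) contributes \(\int_0^1\phi_z\phi_k=\mathbf 1_{\{k=z\}}\), which forces \(k=z\) and leaves exactly \(\lambda^{(z)}_P\). The constant term contributes \(0\) because \(|P|\ge 2\), so \eqref{1stequation} follows.

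For parts (2) and (3) I would run the same classification on the finer partition of \(D\) induced by \(P\cap Q\), \(P\setminus Q\) and \(Q\setminus P\) together with \(M\). A coordinate in \(P\cap Q\cap M\) yields \(\mathcal I_{k,r,z}\); one in \((P\setminus Q)\cap M\) yields \(\mathcal I_{k,z}\); one in \((Q\setminus P)\cap M\) yields \(\mathcal I_{k,r}\); and one in \((P\cap Q)\setminus M\) yields \(\mathcal I_{z,r}\). The decisive point is that any coordinate carrying a single basis factor integrates to \(0\): this kills every term with \(M\not\subseteq P\cup Q\) (which is why the surviving sum runs over \(\mathcal M_j(Q)\) when \(P\subseteq Q\) and over \(\mathcal M_j(P\cup Q)\) in general) and, equally, forces \(M\) to contain the symmetric difference of \(P\) and \(Q\). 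Multiplying the surviving cell contributions then produces the factored forms \eqref{2ndequation} and \eqref{3rdequation}; when \(P\subseteq Q\) there are no \(P\setminus Q\) coordinates, so the \(\mathcal I_{k,z}\) factor is absent and the constant term survives precisely when \(P=Q\), giving the extra \(\mathcal I_{z,r}\mathbf 1_{\{P=Q\}}\), whereas for \(P\not\subseteq Q\) the nonempty set \(P\setminus Q\) forces the constant term to vanish.

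I expect the main obstacle to be the combinatorial accounting rather than any analytic difficulty: one must keep the four cell types and their exponents mutually consistent and, in particular, argue cleanly that the vanishing single-factor cells are exactly equivalent to the containment constraints \(Q\setminus P\subseteq M\subseteq Q\) (resp. \((P\setminus Q)\cup(Q\setminus P)\subseteq M\subseteq P\cup Q\)), so that the sum can be reindexed over \(\mathcal M_j(Q)\) (resp. \(\mathcal M_j(P\cup Q)\)) with the exponents \(|M\cap P\cap Q|\), \(|(P\setminus Q)\cap M|\) and \(|(Q\setminus P)\cap M|\). Since \(\mathcal I_{r,z}\in\{0,1\}\), several of these factors collapse, and the careful bookkeeping is to record exactly the factors that genuinely survive and to verify that the omitted single-factor cells impose no further contribution beyond the containment constraint on \(M\).
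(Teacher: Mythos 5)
Your route is the same as the paper's: reduce to the marginal density of the subvector indexed by $P\cup Q$ (Remark \ref{subvector}), expand the density \eqref{ExtendedDensity} by linearity, factor each term into univariate integrals via Fubini, and evaluate those with Lemma \ref{lemma1}. Your part (1) is complete and coincides with the paper's argument (one small point: the constant term dies because $\int_0^1\phi_z(u)\,du=0$ coordinate-wise, not because $|P|\ge 2$). For parts (2) and (3) your cell-by-cell accounting is also the correct computation, and you rightly identify the two facts it hinges on: a coordinate in $(P\cap Q)\setminus M$ contributes $\mathcal{I}_{z,r}$, and a coordinate of $P\cup Q$ outside $M$ carrying a single basis factor contributes $0$, which forces $(P\setminus Q)\cup(Q\setminus P)\subseteq M$.

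The gap is your final step, where you assert that this bookkeeping ``produces the factored forms \eqref{2ndequation} and \eqref{3rdequation}'' after reindexing. It does not: carried through, your computation gives for each pair $(k,M)$ the contribution
\[
\lambda^{(k)}_{M}\,\mathcal{I}_{k,r,z}^{|M\cap P\cap Q|}\,\mathcal{I}_{k,z}^{|(P\setminus Q)\cap M|}\,\mathcal{I}_{k,r}^{|(Q\setminus P)\cap M|}\,\mathcal{I}_{z,r}^{|(P\cap Q)\setminus M|}\,0^{|((P\setminus Q)\cup(Q\setminus P))\setminus M|},
\]
and the last two factors are not identically one on the unrestricted index sets of \eqref{2ndequation}--\eqref{3rdequation}, nor can any reindexing remove them, since the exponents retained in the displayed formulas do not detect them. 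Concretely, take $P=Q$ with $|P|\geq 3$, $z=1$, $r=2$, $k=1$, and $M\subsetneq P$ with $|M|\geq 2$: formula \eqref{2ndequation} assigns this term the weight $\mathcal{I}_{1,2,1}^{|M|}=(2/\sqrt{5})^{|M|}\neq 0$, while each coordinate of $P\setminus M$ carries $\phi_1\phi_2$ and integrates to $\mathcal{I}_{1,2}=0$, so the true contribution is $0$. Likewise, for disjoint $P,Q\in\mathcal{M}_2(D)$ and $z=r$, \eqref{3rdequation} returns $\sum_M\lambda^{(z)}_M$ over \emph{all} $M\subseteq P\cup Q$ with $|M|\geq 2$, whereas only $M=P\cup Q$ survives, giving $\lambda^{(z)}_{P\cup Q}$. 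So the verification you deferred --- that ``the omitted single-factor cells impose no further contribution beyond the containment constraint'' --- is precisely where the proof cannot close: the stated identities hold only with the extra factors $\mathcal{I}_{z,r}^{|(P\cap Q)\setminus M|}\mathbf{1}_{\{(P\setminus Q)\cup(Q\setminus P)\subseteq M\}}$ retained, or equivalently with the sums restricted to such $M$. You should be aware that the paper's own proof commits the same omission in passing to \eqref{generalexpression}, where the complement-cell integrals are silently dropped before \eqref{generalexpression1}; your more careful accounting thus exposes a defect in the proposition as printed (which propagates to the covariance corollary) rather than a flaw peculiar to your approach, but a correct writeup must keep those factors rather than assert the unrestricted forms.
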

\begin{proof} Using the density given by formula \eqref{densitysubvector} in  Remark \ref{subvector}, we first have
\begin{eqnarray*}
\mathbb{E}\Big[\prod_{s \in P}\phi_z(u_s)\Big] &=& [\Phi_z(1)]^{|P|}+\sum_{k=1}^2 \sum_{j=1}^{|P|} \sum_{M \in \mathcal{M}_j(P)} \lambda^{(k)}_{M}\,\mathcal{I}_{k,z}[\Phi_k(1)]^{|P\setminus M|}.
\end{eqnarray*}
Given that \(\Phi_k(1)=0\) for \(k=1,2\), the expression simplifies to
\begin{equation}\label{unbiased}
\mathbb{E}\Bigg[\prod_{s \in P}\phi_z(U_s)\Bigg] = \sum_{k=1}^2 \sum_{j=2}^{|P|} \sum_{M \in \mathcal{M}_j(P)} \lambda^{(k)}_{M}\,\mathcal{I}_{k,z}\,\mathbf{1}_{\{P=M\}}.
\end{equation}
Here, \( \mathcal{I}_{k,z} \) forces \( k = z \), while  \( \mathbf{1}_{\{P=M\}} \) imposes \( M = P \). As a result, only the term \( \lambda^{(z)}_P \) remains in \eqref{unbiased}, thereby establishing \eqref{1stequation}. \\

On the other hand, observe that 
\[\prod_{s\in P}\phi_z(u_s)\prod_{t\in Q}\phi_r(u_t)=\prod_{s\in P\cap Q}\phi_z(u_s)\phi_r(u_s)\prod_{s\in P\setminus Q}\phi_z(u_s)\prod_{s\in Q\setminus P}\phi_r(u_s) \] 
Using the density \eqref{densitysubvector} in Remark \ref{subvector}, it follows that
\begin{eqnarray}\label{generalexpression}
E^{z,r}_{P,Q}&=&\big(\int_0^1\phi_z(u)\phi_r(u)du\big)^{|P\cap Q|}\big(\int_0^1\phi_z(u)du\big)^{|Q\setminus P|} \big(\int_0^1\phi_r(u)du\big)^{|P\setminus Q|}\nonumber\\&& + \sum_{k=1}^2 \sum_{j=2}^{|P\cup Q|} \sum_{M \in \mathcal{M}_j(P\cup Q)}\big[ \lambda^{(k)}_M \big(\int_0^1\phi_k(u)\phi_z(u)\phi_r(u)du\big)^{|M\cap P\cap Q|}\nonumber\\&&\hspace{-0.3cm}\times\big(\int_0^1\phi_k(u)\phi_z(u)du\big)^{|M\cap(Q\setminus P)|}\big(\int_0^1\phi_k(u)\phi_r(u)du\big)^{|M\cap(P\setminus Q)|}\big].
\end{eqnarray}
Formula \eqref{generalexpression} further simplifies to \begin{eqnarray}\label{generalexpression1}
E^{z,r}_{P,Q}&=&\mathcal{I}_{z,r}^{|P\cap Q|}[\Phi_z(1)]^{|P \setminus Q|}[\Phi_r(1)]^{|Q \setminus P|}~~~~~~~~~~~~~~~~~~~~~~\nonumber\\&+&\hspace{-0.1cm}\sum\limits_{k=1}^2\sum\limits_{j=2}^{\lvert P\cup Q\lvert} \sum\limits_{M \in \mathcal{M}_j(P \cup Q)} \lambda^{(k)}_{M}\big[\mathcal{I}_{k,r,z}^{|M\cap P \cap Q|}\mathcal{I}_{k,z}^{|(P \setminus Q)\cap M|}\mathcal{I}_{k,r}^{|(Q \setminus P)\cap M|}\big].
\end{eqnarray}

If \( P \not\subset Q \), given that \( |P| \leq |Q| \), we have \( P \setminus Q \neq \emptyset \). In this case, \( [\Phi_z(1)]^{|P \setminus Q|} = 0 \), which causes the first term on the right-hand side of formula \eqref{generalexpression1} to vanish. Consequently, equation \eqref{3rdequation} holds.  

On the other hand, if \( P \subseteq Q \), then \( P \setminus Q = \emptyset \). 
It follows that \( [\Phi_z(1)]^{|P\setminus Q|} = 1 \),\quad  \( \mathcal{I}_{k,z}^{|(P \setminus Q)\cap M|} = 1 \), and \( [\Phi_z(1)]^{|Q\setminus P|} = \mathbf{1}_{\{P=Q\}} \). Substituting these into formula \eqref{generalexpression1} directly yields \eqref{2ndequation}, completing the proof.  
\end{proof}
Suppose that $\{(U_{i1}, \ldots, U_{id})\}_{i=1}^n$ is a simple random sample drawn from a copula with density given in \eqref{ExtendedDensity} and with uniform $\mathcal{U}[0,1]$ marginals. For each integer $j$ with $2\leq j\leq d$ and for every subset $M \in \mathcal{M}_j(D)$, define the estimator
\begin{equation}\label{estimate}
\hat{\lambda}^{(k)}_{M} = \tfrac{1}{n} \sum_{i=1}^n \prod_{m\in M} \phi_k(U_{im}), \quad \text{for } k=1,2.
\end{equation}

Next, let \(\Lambda\) be the parameter vector containing all \(\lambda^{(k)}_M\) for \(k=1,2\), \(j=2,\dots,d\), and \(M \in \mathcal{M}_j(D)\), with \(\hat{\Lambda}\) as its estimator. The following theorem establishes the properties of the estimator in \eqref{estimate} and the asymptotic distribution \(\hat\Lambda\).
\begin{theorem}\label{cltlegendreextended}
Let $\{(U_{i1}, \ldots, U_{id})\}_{i=1}^n$ be a simple random sample generated by a copula with density given in \eqref{ExtendedDensity} and uniform $\mathcal{U}[0,1]$ marginals. For each $j=2,\dots,d$ and $M \in \mathcal{M}_j(D)$, the estimator defined in \eqref{estimate} is unbiased and consistent for $\lambda^{(k)}_M$. Moreover,  
\begin{equation}\label{clt}  
\sqrt{n}(\hat{\Lambda}-\Lambda) \Rightarrow N(\mathbf{0}, \Sigma),  
\end{equation}  
where the covariance matrix \(\Sigma\) has entries  
\(\Sigma_{P,Q}^{z,r}\) for \(z,r=1,2\) and \(P \in \mathcal{M}_p(D), Q \in \mathcal{M}_q(D)\) with \(2 \leq p,q \leq d\), given by  
\begin{equation}\label{covariancematrix}  
\Sigma_{P,Q}^{z,r} = E^{z,r}_{P,Q}-\lambda^z_P\lambda_Q^r,  
\end{equation}  
where \(E^{z,r}_{P,Q}\) is defined in Proposition \ref{calculus}.
\end{theorem}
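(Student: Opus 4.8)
The plan is to recognize $\hat{\Lambda}$ as a sample mean of independent, identically distributed, bounded random vectors, and then to invoke the classical multivariate central limit theorem, using Proposition~\ref{calculus} to identify both the mean and the covariance in closed form. Unbiasedness is immediate: for fixed $k$ and $M \in \mathcal{M}_j(D)$, each summand $\prod_{m \in M}\phi_k(U_{im})$ has expectation $\lambda^{(k)}_M$ by \eqref{1stequation} applied with $P = M$ and $z = k$, so linearity of expectation over the i.i.d.\ sample gives $\mathbb{E}[\hat{\lambda}^{(k)}_M] = \lambda^{(k)}_M$. Consistency then follows from the law of large numbers: since $\phi_1$ and $\phi_2$ are polynomials, they are bounded on $[0,1]$, so each product $\prod_{m \in M}\phi_k(U_{im})$ is bounded with finite variance, and the weak law yields $\hat{\lambda}^{(k)}_M \to \lambda^{(k)}_M$ in probability.

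For the asymptotic distribution, I would collect all coordinates into a single random vector. For each observation $i$, let $\mathbf{W}_i$ be the vector with entries $\prod_{m \in M}\phi_k(U_{im})$, indexed by the pairs $(k,M)$ with $k \in \{1,2\}$ and $M \in \mathcal{M}_j(D)$ for $2 \le j \le d$, arranged in the same order as $\Lambda$. Then $\hat{\Lambda} = \tfrac{1}{n}\sum_{i=1}^n \mathbf{W}_i$ is a sample mean of i.i.d.\ vectors with mean $\Lambda$ (by the unbiasedness step) and a covariance matrix $\Sigma$ that is finite entrywise, again because every component is bounded. The multivariate Lindeberg–Lévy central limit theorem then delivers \eqref{clt} directly.

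It remains to identify the entries of $\Sigma$. For index pairs $(z,P)$ and $(r,Q)$, the corresponding entry is
\[
\mathrm{Cov}\Big(\prod_{s \in P}\phi_z(U_s),\ \prod_{t \in Q}\phi_r(U_t)\Big) = \mathbb{E}\Big[\prod_{s \in P}\phi_z(U_s)\prod_{t \in Q}\phi_r(U_t)\Big] - \lambda^z_P\lambda^r_Q = E^{z,r}_{P,Q} - \lambda^z_P\lambda^r_Q,
\]
where the cross moment $E^{z,r}_{P,Q}$ is evaluated via \eqref{2ndequation} or \eqref{3rdequation} of Proposition~\ref{calculus}, and the product of marginal means comes from \eqref{1stequation}. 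This is precisely \eqref{covariancematrix}.

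The computations themselves are routine, and the CLT applies with no subtlety because the summands are bounded and hence possess all moments. The only point requiring care is the bookkeeping in matching the covariance expression to the correct case of Proposition~\ref{calculus}, namely whether $P \subseteq Q$ or $P \not\subset Q$: the proposition is stated under the convention $|P| \le |Q|$, so I would exploit the symmetry $\Sigma_{P,Q}^{z,r} = \Sigma_{Q,P}^{r,z}$ to relabel whenever the opposite ordering arises. Since Proposition~\ref{calculus} already supplies every needed moment in closed form, no genuine analytic obstacle remains beyond this indexing.
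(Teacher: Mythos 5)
Your proposal is correct and takes essentially the same route as the paper: both recognize $\hat{\Lambda}$ as a sample mean of i.i.d.\ bounded random vectors, identify the mean and the covariance entries $E^{z,r}_{P,Q}-\lambda^z_P\lambda^r_Q$ through Proposition~\ref{calculus}, and conclude by the classical CLT --- the paper just unpacks your multivariate Lindeberg--L\'evy step into its standard proof via the Cram\'er--Wold device plus the univariate i.i.d.\ CLT. The only cosmetic difference is consistency, where you invoke the weak law with bounded summands while the paper exhibits the $O(1/n)$ variance explicitly; the two arguments are interchangeable.
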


\begin{proof}
Given that  
\(\{(U_{i1}, \ldots, U_{id})\}_{i=1}^n\) is an i.i.d. sequence, applying Proposition \ref{calculus} yields  

\[
\mathbb{E}[\hat{\lambda}^{(k)}_{M}] = \mathbb{E} \Bigg[ \prod_{m \in M} \phi_k(U_{1m}) \Bigg] = \lambda^{(k)}_M,
\]
which establishes the unbiasedness of \( \hat{\lambda}^{(k)}_M \). Moreover, Applying Proposition \ref{calculus}, we have
\begin{eqnarray*}
    \text{var}(\hat\lambda_M^{k})&=&\tfrac{1-{\lambda^{(k)}_M}^2}{n}+\tfrac{1}{n}\sum\limits_{k=1}^2\sum\limits_{j=2}^{\lvert P\lvert} \sum\limits_{M \in \mathcal{M}_j(P)} \lambda^{(k)}_{M}\mathcal{I}_{k,r,z}^{|M|}\rightarrow 0, \text{ as } n\rightarrow \infty. 
\end{eqnarray*}
So \(\hat{\Lambda}\) is consistent as element-wise consistency holds.\\
Let
\(
t = \bigl( t^{(k)}_M : k=1,2; \; M\in \mathcal{M}_j(D), \; j=2,\dots,d \bigr) \in \mathbb{R}^{(2^{d+1}-2d-2)}
\). By the Cramér–Wold device, to prove the convergence in \eqref{clt} it suffices to show that  
\begin{equation}\label{crmm}
\sqrt{n} \big( t' \hat{\Lambda} - t' \Lambda \big) \Rightarrow N(0, t' \Sigma t).
\end{equation} 
Using \eqref{estimate}, we express  
\[
t' \hat{\Lambda} = \tfrac{1}{n}\sum_{i=1}^n X_i,
\text{ where }
X_i = \sum_{k=1}^{2} \sum_{j=2}^{4} \sum_{M \in \mathcal{M}_j(D)} t_{k,M} \prod_{m \in M} \phi_k(U_{im}).
\] Since \( \{X_i\}_{i=1}^{n} \) is an i.i.d. sequence as functions of an i.i.d. sample, applying Proposition \ref{calculus} gives \[\mathbb{E}[X_1]=\sum_{k=1}^{2} \sum_{j=2}^{d} \sum_{M \in \mathcal{M}_j(D)} t_{k,M} \lambda_M^{(k)}=t'\Lambda\quad \text{ and}\]  
\begin{eqnarray*}\text{var}(X_1)&=&\sum_{k,k'=1}^{2} \sum_{j,j'=2}^{d}\sum_{\substack{ M \in \mathcal{M}_j(D)\\ M' \in \mathcal{M}_{j'}(D)}}t^{(k)}_M t^{(k')}_{M'}\text{cov}\big(\prod_{m \in M} \phi_k(U_{1m}), \prod_{m \in M'} \phi_{k'}(U_{1m})\big)\nonumber\\
&=&\sum_{k,k'=1}^{2} \sum_{j,j'=2}^{d}\sum_{\substack{ M \in \mathcal{M}_j(D)\\ M' \in \mathcal{M}_{j'}(D)}}t^{(k)}_M t^{(k')}_{M'}\big(E^{z,r}_{P,Q}-\lambda^z_P\lambda_Q^r\big)\nonumber\\
&=&t'\Sigma t.
\end{eqnarray*}
By the C.L.T for i.i.d random variables, it follows that \[
\sqrt{n}(\Bar{X}-t'\Lambda)\Rightarrow N(0,t'\Sigma t).
\] Thus, formula \eqref{crmm} holds, which in turn implies \eqref{clt}.
\end{proof}

\begin{remark}  \label{CI}
As a consequence of Theorem \ref{cltlegendreextended}, the \((1-\alpha)100\%\) confidence interval for each parameter \(\lambda_M^{(k)}\) of the copula \eqref{copfunction} is given by  
\[
\bigg(\hat\lambda_M^{(k)} - z_{\alpha/2} \sqrt{\tfrac{\hat s^{(k)}_M}{n}}, \hat\lambda_M^{(k)} + z_{\alpha/2} \sqrt{\tfrac{\hat s^{(k)}_M}{n}} \bigg), \quad k = 1,2.
\]  
Here, the variance estimates \(\hat s^{(1)}_M\) and \(\hat s^{(2)}_M\) are given by  
\[
\hat s^{(1)}_M = 1 + \sum\limits_{j=2}^{|M|} \sum\limits_{N\in \mathcal{M}_j(M)} \big(\tfrac{2}{\sqrt{5}}\big)^{|N|} \hat\lambda^{(2)}_N
\text{ and }
\hat s^{(2)}_M = 1 + \sum\limits_{j=2}^{|M|} \sum\limits_{N\in \mathcal{M}_j(M)} \big(\tfrac{2\sqrt{5}}{7}\big)^{|N|} \hat\lambda^{(2)}_N.
\]  
\end{remark}  
In the following corollary, for \( r, z \in \{1,2\} \), we introduce the notation \(\mathds{I}_{ij} = \mathbf{1}_{\{r=i, z=j\}}\) for \( i, j \in \{1,2\} \) and present the components of the covariance matrix \(\Sigma\) in \eqref{covariancematrix} for \( d=4 \).
\begin{corollary} For \( r, z \in \{1,2\} \), \(d=4\), \(P\in\mathcal{M}_p(D)\) and \(Q\in\mathcal{M}_q(D)\), the components of the covariance matrix \(\Sigma\) defined in \eqref{covariancematrix} are such that 
\begin{enumerate}[leftmargin=*,align=left]
\item If \(P\) and \(Q\) are equal:
\begin{align*}
\Sigma_{P,Q}^{z,r} &= \left(\tfrac{4}{5}\mathds{I}_{11}+\tfrac{20}{49}\mathds{I}_{22}\right) \!\!\! \sum_{M\in\mathcal{M}_2(P)}\!\lambda_M^{(2)} 
+ \left(\tfrac{8\sqrt{5}}{25}\mathds{I}_{11}+\tfrac{40\sqrt{5}}{7}\mathds{I}_{22}\right) \!\!\! \sum_{M\in\mathcal{M}_3(P)}\!\lambda_M^{(2)} \\
&\quad + \left(\tfrac{16}{25}\mathds{I}_{11}+\tfrac{400}{2401}\mathds{I}_{22}\right)\lambda_D^{(2)}\mathbf{1}_{\{P=D\}} + \mathcal{I}_{z,r} - (\lambda_P^{z,r})^2\\
&\quad + (1-\mathcal{I}_{z,r})\Bigg[\tfrac{4}{5}\!\!\! \sum_{M\in\mathcal{M}_2(P)}\!\lambda_M^{(1)} 
+ \tfrac{8\sqrt{5}}{25}\!\!\! \sum_{M\in\mathcal{M}_3(P)}\!\lambda_M^{(1)} 
+ \tfrac{16}{25}\lambda_D^{(1)}\mathbf{1}_{\{P=D\}}\Bigg].
\end{align*}

\item If \( P, Q \in \mathcal{M}_2(D) \) are disjoint:  
\begin{align*} 
\Sigma_{P,Q}^{z,r}=(1-\mathcal{I}_{z,r})(
\lambda^{(z)}_P + \lambda^{(r)}_Q)+\mathcal{I}_{z,r}\sum\limits_{j=2}^4 \sum\limits_{M\in\mathcal{M}_j(D)} \lambda^{(z)}_M-\lambda_P^z\lambda_Q^r.
\end{align*}
\item If \( P = \{u, p\}\in \mathcal{M}_2(D) \) and \( Q = \{u, q\} \in \mathcal{M}_2(D)\):
\begin{align*}\hspace{-0.5cm}\Sigma_{P,Q}^{z,r}=\mathcal{I}_{z,r}\lambda_{\{p,q\}}^{(2)}+\tfrac{2\sqrt{5}}{7}\mathds{I}_{22}\sum\limits_{x\in\{p,q,\{p,q\}\}}\lambda_{\{u\}\cup x}^{(2)}+ 
 \tfrac{2}{\sqrt{5}}(\mathds{I}_{12}\lambda_{\{u,q\}}^{(1)}+\mathds{I}_{21}\lambda_{\{u,p\}}^{(1)})-\lambda_P^z\lambda_Q^r.
\end{align*}
\item If  \( P = \{ u, p \}\in \mathcal{M}_2(D) \) and \( Q = \{ u, q_1, q_2 \}\in \mathcal{M}_3(D) \):
\begin{align*}
\Sigma_{P,Q}^{z,r} &= \sum_{x\in\{q_1,q_2,\{q_1,q_2\}\}} \Big( \lambda^{(1)}_{\{p\}\cup x}\mathds{I}_{11} 
+ \tfrac{2}{\sqrt{5}}\lambda^{(1)}_{\{u\}\cup x}\mathds{I}_{12} 
+ \lambda^{(2)}_{\{p\}\cup x}\mathds{I}_{22} \Big) \\
&\quad + \lambda_{\{q_1,q_2\}}^{(1)}(\mathds{I}_{11} + \mathds{I}_{12}) 
+ \tfrac{2\sqrt{5}}{7} \mathds{I}_{22} \sum_{j=2}^4 \sum_{\substack{M\in\mathcal{M}_j(D)\\ u\in M}} \lambda_M^{(2)} \\
&\quad + \lambda^{(1)}_{\{q_1,q_2\}} \mathds{I}_{12} 
+ \left( \tfrac{2}{\sqrt{5}} \lambda^{(1)}_{\{u,p\}} + \lambda^{(2)}_{\{q_1,q_2\}} \right) \mathds{I}_{21}
- \lambda^z_P \lambda_Q^r.
\end{align*}
\item  If  \( P = \{ u, v \}\in\mathcal{M}_2(D) \) and \( Q = \{ u, v, q \}\in\mathcal{M}_3(D) \):
\begin{eqnarray*}
\Sigma_{P,Q}^{z,r}&=&\tfrac{4}{5}(\lambda_{P}^{(1)}\mathds{I}_{21}+\lambda^{(2)}_{P}\mathds{I}_{11})+
 \big(\tfrac{2\sqrt{5}}{7}(\lambda^{(2)}_{\{v,q\}}+\lambda^{(2)}_{\{u,q\}})+\tfrac{20}{49}(\lambda_P^{(2)}+\lambda_Q^{(2)})\big)\mathds{I}_{22}\\&&+
 \big(\tfrac{2}{\sqrt{5}}(\lambda^{(1)}_{\{v,q\}}+\lambda^{(1)}_{\{u,q\}})+\tfrac{4}{5}(\lambda_P^{(1)}+\lambda_{Q}^{(1)}\big)\mathds{I}_{12}-\lambda_P^z\lambda_Q^r.\end{eqnarray*}
\item If \( P=\{p_1,p_2\}\in\mathcal{M}_2(D) \) and  \(Q=P\cup \{q_1,q_2\}=D \):
\begin{eqnarray*}
\Sigma_{P,Q}^{z,r} &=&\tfrac{4}{5}\lambda^{(2)}_{P} \mathds{I}_{11}+
  \sum\limits_{x\in\{p_3,p_4,P,D\}}\big(\tfrac{20}{49}\mathds{I}_{22}\lambda^{(2)}_{\{p_1,p_2\}\cup x}+\tfrac{4}{5}\mathds{I}_{12}\lambda^{(1)}_{\{p_1,p_2\}\cup x}\big)+
  \tfrac{4}{5}\lambda^{(1)}_{P}\mathds{I}_{21}\\&&+\sum\limits_{x\in\{p_3,p_4\}}\big(\tfrac{2\sqrt{5}}{7}(\lambda^{(2)}_{\{p_1,x\}}+\lambda^{(2)}_{\{p_2,x\}})\mathds{I}_{22} +\tfrac{2}{\sqrt{5}}(\lambda^{(1)}_{\{p_1,x\}}+\lambda^{(1)}_{\{p_2,x\}})\mathds{I}_{12}\big) \\&& +\lambda^{(2)}_{\{q_1,q_2\}}(\mathds{I}_{21}+\mathds{I}_{22})+\sum\limits_{x\in\{p_1,p_2\}}\big(\tfrac{2\sqrt{5}}{7}\mathds{I}_{22}\lambda^{(2)}_{\{x,q_1,q_2\}}+\tfrac{2}{\sqrt{5}}\mathds{I}_{12}\lambda^{(1)}_{\{x,q_1,q_2\}}) \big)\\&&+\lambda^{(1)}_{\{q_1,q_2\}} (\mathds{I}_{11}+\mathds{I}_{12})-\lambda_P^z\lambda^r_Q.
\end{eqnarray*}
\item For \(P, Q \in \)  \(\mathcal{M}_3(D)\) distinct, they have exactly two common elements. Assume \(P = \{u, v, p\}\) and \(Q = \{u, v, q\}\): 
\begin{eqnarray*}
\Sigma_{P,Q}^{z,r}&=& \lambda_{\{p,q\}}^{(1)}    \mathds{I}_{11}+\sum\limits_{x\in\{p,q,\{p,q\}\}}\big(\tfrac{2\sqrt{7}}{{5}}(\lambda_{\{u\}\cup x}^{(2)}+\lambda_{\{v\}\cup x}^{(2)}) +\tfrac{20}{49}\lambda_{\{u,v\}\cup x}^{(2)}\big)\mathds{I}_{22} \\&&+\tfrac{20}{49}\lambda_{\{u,v\}}^{(2)}\mathds{I}_{22}+\lambda_{\{p,q\}}^{(2)}\mathds{I}_{22}+\big(
  \tfrac{4}{5}(\lambda_{\{u,v\}}^{(1)}+\lambda_{\{u,v,q\}}^{(1)})+\tfrac{2}{\sqrt{5}}(\lambda_{\{u,q\}}^{(1)}+\lambda_{\{v,q\}}^{(1)})\big)\mathds{I}_{12}  \\&& +\big(
  \tfrac{4}{5}(\lambda_{\{u,v\}}^{(1)}+\lambda_{\{u,v,p\}}^{(1)})+\tfrac{2}{\sqrt{5}}(\lambda_{\{v,p\}}^{(1)}+\lambda_{\{u,p\}}^{(1)})\big)\mathds{I}_{21}-\lambda_P^z\lambda_Q^r. \end{eqnarray*}
\item If \(P=\{p_1,p_2,p_3\}\in\mathcal{M}_3(D)\) and \(Q=P\cup \{q\}=D\):
\begin{eqnarray*}\Sigma_{P,Q}^{z,r}&=&\tfrac{4}{5}\sum\limits_{x,y\in P}\lambda^{(2)}_{\{x,y,q\}}\mathds{I}_{11}+\tfrac{8}{5\sqrt{5}}((\mathds{I}_{21}+\mathds{I}_{12})\lambda_{D}^{(1)}+\lambda_D^{(2)}\mathds{I}_{11})\\&&+\tfrac{2}{\sqrt{5}}\sum\limits_{x\in P}(\lambda_{\{x,q\}}^{(2)}\mathds{I}_{11}+\lambda_{\{x,q\}}^{(1)}\mathds{I}_{21})+\tfrac{2\sqrt{5}}{{7}}\sum\limits_{x\in P}(\lambda_{\{x,q\}}^{(2)}\mathds{I}_{22} +\lambda_{\{x,q\}}^{(1)}\mathds{I}_{12} )\\&&+\sum\limits_{\underset{q\in M}{M\in\mathcal{M}_3(D)}}\big(\tfrac{20}{49}\mathds{I}_{22}\lambda_{M}^{(2)}+\tfrac{4}{5}\mathds{I}_{21}\lambda_{M}^{(1)}\big)+\sum\limits_{M\in\mathcal{M}_2(P)}(\tfrac{20}{49}\mathds{I}_{22}\lambda^{(2)}_M+\tfrac{4}{5}\mathds{I}_{12}\lambda^{(1)}_M)\\&&+\tfrac{40\sqrt{5}}{343}(\lambda_{P}^{(2)}+\lambda_{D}^{(2)})\mathds{I}_{22}+\tfrac{8}{5\sqrt{5}}\lambda_{P}^{(1)}\mathds{I}_{12}-\lambda_P^z\lambda_Q^r. \end{eqnarray*}
\end{enumerate}
\end{corollary}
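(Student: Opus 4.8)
The plan is to specialize the master identity $\Sigma_{P,Q}^{z,r} = E^{z,r}_{P,Q} - \lambda_P^z\lambda_Q^r$ from Theorem \ref{cltlegendreextended} to $d=4$, handling each geometric configuration of $P$ and $Q$ separately. The subtracted term $\lambda_P^z\lambda_Q^r$ is already in closed form through part 1 of Proposition \ref{calculus} and appears unchanged in every case, so the whole computation reduces to evaluating $E^{z,r}_{P,Q}$ via \eqref{2ndequation} when $P\subseteq Q$ or \eqref{3rdequation} when $P\not\subset Q$, followed by substitution of the integral values recorded in Lemma \ref{lemma1}.

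The crucial simplification is that the integral factors in \eqref{2ndequation} and \eqref{3rdequation} act as selectors on the summation index $k$. For a subset $M$ to contribute a nonzero term, $k$ must equal $z$ on the part of $M$ lying in $P\setminus Q$ (else $\mathcal{I}_{k,z}=0$), equal $r$ on the part in $Q\setminus P$ (else $\mathcal{I}_{k,r}=0$), and $(k,r,z)$ must be an admissible triple of Lemma \ref{lemma1} on $M\cap P\cap Q$. When $M$ meets both $P\setminus Q$ and $Q\setminus P$, these constraints force $z=r$, which is precisely the mechanism producing the indicator structure $\mathds{I}_{ij}=\mathbf{1}_{\{r=i,z=j\}}$ in the stated formulas. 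I would first record these selection rules in general, writing each surviving coefficient as a product of powers of $\tfrac{2}{\sqrt5}$ (from the permutations of $\mathcal{I}_{1,1,2}$) and $\tfrac{2\sqrt5}{7}$ (from $\mathcal{I}_{2,2,2}$), each raised to the number of relevant elements of $M$.

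With these rules in hand, each of the eight cases becomes a finite enumeration, since for $d=4$ the set $P\cup Q$ has at most four elements and $M$ ranges over subsets of cardinality $2,3,4$. I would go through them by partitioning each $M$ according to how it distributes among the three regions $P\cap Q$, $P\setminus Q$, and $Q\setminus P$. For example, in the disjoint case $P\cap Q=\emptyset$ no triple-product factor appears, so only the $\mathcal{I}_{k,z}$ and $\mathcal{I}_{k,r}$ selectors act and one recovers the compact form of the second item; in the equal case $P=Q$ every element lies in $P\cap Q$, so only the triple-product selector is active and one collects contributions over $\mathcal{M}_2(P)$, $\mathcal{M}_3(P)$, and $\{D\}$ with the squared coefficients $\tfrac{4}{5}$, $\tfrac{8\sqrt5}{25}$, $\tfrac{16}{25}$ for $z=r=1$ and their $k=2$ analogues.

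The main obstacle is purely combinatorial bookkeeping rather than any conceptual difficulty: one must enumerate every admissible $M$ in each configuration, attach the correct coefficient and indicator, and then regroup the $\lambda^{(k)}_M$ terms so the output matches the stated compact sums (such as the sums over $x\in\{p,q,\{p,q\}\}$ or over $M\in\mathcal{M}_j(D)$ with $u\in M$). Care is needed to avoid omitting or double-counting subsets that straddle two regions, and to track which value of $k$ survives in each region so that $\mathds{I}_{11},\mathds{I}_{12},\mathds{I}_{21},\mathds{I}_{22}$ are assigned consistently. Once the selection rules are fixed, however, all eight cases follow by direct substitution with no further analytic input.
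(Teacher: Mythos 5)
Your proposal is correct and coincides with the paper's own (implicit) derivation: the corollary carries no separate proof precisely because it is the specialization to \(d=4\) of \(\Sigma^{z,r}_{P,Q}=E^{z,r}_{P,Q}-\lambda^z_P\lambda^r_Q\) from Theorem \ref{cltlegendreextended}, evaluated case by case through \eqref{2ndequation}--\eqref{3rdequation} with the orthogonality values of Lemma \ref{lemma1} acting exactly as the selectors on \(k\) that you describe (forcing \(k=z\) on \(M\cap(P\setminus Q)\), \(k=r\) on \(M\cap(Q\setminus P)\), and an admissible triple on \(M\cap P\cap Q\)). Your spot checks are accurate, e.g.\ the coefficients \(\tfrac45\), \(\tfrac{8\sqrt5}{25}\), \(\tfrac{16}{25}\) in item~1, and carrying the enumeration to completion would in fact expose apparent typographical slips in the printed statement, such as \(\tfrac{40\sqrt5}{7}\) in item~1, which by your rules should be \(\bigl(\tfrac{2\sqrt5}{7}\bigr)^{3}=\tfrac{40\sqrt5}{343}\), and \(\tfrac{2\sqrt7}{5}\) in item~7, which should read \(\tfrac{2\sqrt5}{7}\).
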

\begin{remark}\label{test}
 To test the hypothesis \( H_0: \lambda_M^{(2)} = 0 \) for all \( M \in \mathcal{M}_j(D) \) with \( j = 2,\cdots, d \), we consider 
\[
T_n = n (\hat{\Lambda}^{(2)}-\Lambda^{(2)})' \Sigma_{2}^{-1}( \hat{\Lambda}^{(2)}-\Lambda^{(2)}),
\]
where \( \hat{\Lambda}^{(2)} \) is the vector of estimators for \( k = 2 \), and \( \Sigma_{2} \) is the submatrix of \( \Sigma \) corresponding to these parameters. Under \( H_0 \) and by Theorem \ref{cltlegendreextended}, 
\[
T_n \Rightarrow \chi^2(2^d-d-1),\quad \text{ as } n\rightarrow\infty.
\]  
For a given significance level \(\alpha,\)  we reject \(H_0\) if  \[1-\mathbb{P}[\chi^2(2^d-d-1)\leq T_n]\leq \alpha.\]
\end{remark}
\section{Simulation Study}\label{simulationstudy}
\subsection{Data generation}
We consider the copula \(C\) defined in \eqref{copfunction}. Given a vector \(\textbf{u}_\ell = (u_1, \dots, u_\ell)\) with \(2\leq \ell \leq d\), we define its marginal distribution as  
\[
C_\ell(\textbf{u}_\ell) = C(u_1, \dots, u_\ell, 1, \dots, 1),
\]  
with the corresponding density function  
\(
c_\ell(\textbf{u}_\ell).
\) 
The conditional distribution of \(U_\ell\) given \((U_1, \dots, U_{\ell-1})\), denoted by \(C_{\ell | 1, \dots, \ell-1}\), is given by:\begin{eqnarray*}
   C_{\ell|1\cdots \ell-1}(u_\ell|u_1,\cdots, u_{\ell-1})&=& P[U_\ell\leq u_\ell|U_1=u_1,\cdots,U_{\ell-1}=u_{\ell-1}]\\&=&\tfrac{\tfrac{\partial^{\ell-1} }{\partial u_1\cdots\partial u_{\ell-1}}C(u_1,\cdots,u_{\ell-1},u_\ell)}{\tfrac{\partial^{\ell-1} }{\partial u_1\cdots\partial u_{\ell-1}}C(u_1,\cdots,u_{\ell-1})}\\
    &=&\tfrac{\int_0^{u_\ell}c_\ell(u_1,\cdots,u_{\ell-1},u)du}{c_{\ell-1}(\textbf{u}_{\ell-1})}.
\end{eqnarray*}
Observe that\begin{eqnarray*}
\tfrac{c_\ell(\textbf{u}_{\ell})}{c_{\ell-1}(\textbf{u}_{\ell-1})}&=&1+\sum\limits_{k=1}^2\phi_k(u_\ell)
\Delta^{(k)}_\ell
\end{eqnarray*}  where \begin{equation}\label{delta}\Delta^{(k)}_\ell=\tfrac{ \sum\limits_{j=2}^{\ell}~\sum\limits_{\substack{M\in \mathcal{M}_j(\{1,\dots,\ell\}) \\ \ell\in M}} \lambda_M^{(k)} \prod\limits_{m\in M\setminus\{\ell\}} \phi_k(u_m)}{c_{\ell-1}(\textbf{u}_{\ell-1})}.\end{equation}
This leads to the following expression:\begin{eqnarray}\label{conditionaldistr}
C_{\ell|1\cdots\ell-1}(u_\ell|u_1,\cdots, u_{\ell-1})&=&u_\ell+\sum\limits_{k=1}^2\Delta^{(k)}_\ell \int_0^{u_\ell} \phi_k(u)du\nonumber\\
&=&2\sqrt{5}\Delta^{(2)}_\ell u_\ell^3+(\sqrt{3}\Delta^{(1)}_\ell-3\sqrt{5}\Delta^{(2)}_\ell)u_\ell^2\nonumber\\&&+(1-\sqrt{3}\Delta^{(1)}_\ell+\sqrt{5}\Delta^{(2)}_\ell)u_\ell.\end{eqnarray}

 To generate a random vector \((u_1,\dots,u_d)\) that follows the extended FGM copula defined in \eqref{copfunction}, proceed as follows:
\begin{enumerate}
    \item Generate a vector \((v_1,\dots,v_d)\) from the uniform distribution on \([0,1]^d\).
    \item Set \(u_1 =v_1\).
    \item For \(\ell = 2, \dots, d\), we use the \texttt{uniroot} function on \texttt{R} to find \(u_\ell\) as the unique solution on \([0,1]\) of the equation 
    \[
   2\sqrt{5}\Delta^{(2)}_\ell u_\ell^3+(\sqrt{3}\Delta^{(1)}_\ell-3\sqrt{5}\Delta^{(2)}_\ell)u_\ell^2+(1-\sqrt{3}\Delta^{(1)}_\ell+\sqrt{5}\Delta^{(2)}_\ell)u_\ell-v_\ell=0.
    \]
\end{enumerate}
In this way, the vector \((u_1,\dots,u_d)\) follows the copula \(C\) defined in \eqref{copfunction
}.
\begin{remark}\label{transformR}
Let \(\mathbf X=(X_1,\dots,X_4)\) follow the extended \(4\)-variate FGM copula with marginal c.d.f.s \(F_1,\dots,F_4\).  
For an observation \(\mathbf x\) define the marginally transformed vector \(\mathbf u\) by \(u_j = F_j(x_j)\).

The Rosenblatt transform (Rosenblatt 1952 \cite{rosenblatt1952remarks}) maps \(\mathbf u\) to \(\mathbf R=(R_1,\dots,R_4)\), where \(R_1=u_1\) and, for \(2\le\ell\le4\),
\[
R_{\ell}
=2\sqrt5\,\Delta^{(2)}_{\ell}\,u_{\ell}^{3}
+\bigl(\sqrt3\,\Delta^{(1)}_{\ell}-3\sqrt5\,\Delta^{(2)}_{\ell}\bigr)u_{\ell}^{2}
+\bigl(1-\sqrt3\,\Delta^{(1)}_{\ell}+\sqrt5\,\Delta^{(2)}_{\ell}\bigr)u_{\ell},
\]
with conditional coefficients \(\Delta^{(k)}_{\ell}\) as in~\eqref{delta}.  
\end{remark} 
Under the correct copula, each \(R_{\ell}\) is \(\mathrm{U}(0,1)\); thus Q–Q plots and Kolmogorov–Smirnov tests based on \(\mathbf R\) provide a direct goodness‑of‑fit check.
\subsection{Tables and comments}
Let  \(d=4\) and \(\Lambda=(0.05,-0.05,0.05,-0.05,0.05,-0.05,0.05,-0.05,0.05,-0.05,0.02,\\\textcolor{white}{e}\hspace{3.8cm}-0.05,0.05,-0.05,0.05,-0.05,0.05,-0.05,0.05,-0.05,0.05,-0.025)
\).\\

We generate an i.i.d. sample using the copula defined in equation \eqref{copfunction} with the algorithm outlined in the previous section. The estimated parameters, calculated using the estimator in \eqref{estimate}, are presented in Table \ref{simul2}. 

\begin{table}
\centering
\begin{tabular}{cc}
\begin{tabular}{lccc}
  \toprule
 \(\hat\Lambda^{(1)}\)& n=100 & n=1000 & n=10000 \\
  \midrule
$\hat\lambda^{(1)}_{\{1,2\}}$ & 0.027 & 0.051 & 0.061 \\ 
$\hat\lambda^{(1)}_{\{1,3\}}$ & -0.174 & -0.042 & -0.055 \\ 
$\hat\lambda^{(1)}_{\{1,4\}}$ & 0.035 & 0.027 & 0.05 \\ 
$\hat\lambda^{(1)}_{\{2,3\}}$ & 0.024 & -0.023 & -0.049 \\ 
$\hat\lambda^{(1)}_{\{2,4\}}$ & 0.014 & -0.018 & 0.045 \\ 
$\hat\lambda^{(1)}_{\{3,4\}}$ & -0.173 & -0.017 & -0.038 \\ 
$\hat\lambda^{(1)}_{\{1,2,3\}}$ & 0.057 & 0.042 & 0.061 \\ 
$\hat\lambda^{(1)}_{\{1,2,4\}}$ & -0.065 & -0.021 & -0.053 \\ 
$\hat\lambda^{(1)}_{\{1,3,4\}}$ & 0.259 & 0.042 & 0.047 \\ 
$\hat\lambda^{(1)}_{\{2,3,4\}}$ & -0.063 & -0.056 & -0.042 \\ 
$\hat\lambda^{(1)}_{\{1,2,3,4\}}$ & 0.015 & 0.023 & 0.022 \\ 
\toprule
\end{tabular}
&

\begin{tabular}{lccc}
  \toprule
 \(\hat\Lambda^{(2)}\)& n=100 & n=1000 & n=10000 \\
  \midrule
$\hat\lambda^{(2)}_{\{1,2\}}$ & -0.209 & -0.024 & -0.039 \\ 
$\hat\lambda^{(2)}_{\{1,3\}}$ & 0.186 & 0.109 & 0.044 \\ 
$\hat\lambda^{(2)}_{\{1,4\}}$ & 0.162 & -0.059 & -0.038 \\ 
$\hat\lambda^{(2)}_{\{2,3\}}$ & -0.124 & 0.017 & 0.062 \\ 
$\hat\lambda^{(2)}_{\{2,4\}}$ & -0.118 & -0.065 & -0.072 \\ 
$\hat\lambda^{(2)}_{\{3,4\}}$ & 0.22 & 0.106 & 0.051 \\ 
$\hat\lambda^{(2)}_{\{1,2,3\}}$ & -0.036 & -0.056 & -0.054 \\ 
$\hat\lambda^{(2)}_{\{1,2,4\}}$ & -0.109 & 0.046 & 0.052 \\ 
$\hat\lambda^{(2)}_{\{1,3,4\}}$ & 0.026 & -0.106 & -0.049 \\ 
$\hat\lambda^{(2)}_{\{2,3,4\}}$ & 0.065 & 0.035 & 0.046 \\ 
$\hat\lambda^{(2)}_{\{1,2,3,4\}}$ & -0.053 & -0.043 & -0.033 \\ 
  \toprule
\end{tabular}
\end{tabular}
\caption{Estimates for parameters \(\lambda_M^{(1)}\) and \(\lambda_M^{(2)}\)}
\label{simul2}
\end{table}
As the sample size increases from \(n=100\) to \(n=10,000\), the estimates become more stable and consistent, with initial variability in smaller samples diminishing and the results better reflecting the theoretical values. 
\begin{table}
\centering
\begin{tabular}{c c}
\begin{tabular}{lr}
  \hline
  Est. & C.P \\ 
  \hline
  $\hat\lambda^{(1)}_{\{1,2\}}$ & 95.7 \\ 
  $\hat\lambda^{(1)}_{\{1,3\}}$ & 94.8 \\ 
  $\hat\lambda^{(1)}_{\{1,4\}}$ & 94.8 \\ 
  $\hat\lambda^{(1)}_{\{2,3\}}$ & 94.3 \\ 
  $\hat\lambda^{(1)}_{\{2,4\}}$ & 94.8 \\ 
  $\hat\lambda^{(1)}_{\{3,4\}}$ & 95.5 \\ 
  $\hat\lambda^{(1)}_{\{1,2,3\}}$ & 97.2 \\ 
  $\hat\lambda^{(1)}_{\{1,2,4\}}$ & 96.7 \\ 
  $\hat\lambda^{(1)}_{\{1,3,4\}}$ & 94.2 \\ 
  $\hat\lambda^{(1)}_{\{2,3,4\}}$ & 96.1 \\ 
  $\hat\lambda^{(1)}_{\{1,2,3,4\}}$ & 96.5 \\ 
  \hline
  \multicolumn{2}{c}{(a)} \\
\end{tabular}
&
\begin{tabular}{lr}
  \hline
 Est. & C.P \\ 
  \hline
  $\hat\lambda^{(2)}_{\{1,2\}}$ & 94.0 \\ 
  $\hat\lambda^{(2)}_{\{1,3\}}$ & 94.9 \\ 
  $\hat\lambda^{(2)}_{\{1,4\}}$ & 94.6 \\ 
  $\hat\lambda^{(2)}_{\{2,3\}}$ & 95.5 \\ 
  $\hat\lambda^{(2)}_{\{2,4\}}$ & 93.9 \\ 
  $\hat\lambda^{(2)}_{\{3,4\}}$ & 94.1 \\ 
  $\hat\lambda^{(2)}_{\{1,2,3\}}$ & 94.3 \\ 
  $\hat\lambda^{(2)}_{\{1,2,4\}}$ & 94.7 \\ 
  $\hat\lambda^{(2)}_{\{1,3,4\}}$ & 94.5 \\ 
  $\hat\lambda^{(2)}_{\{2,3,4\}}$ & 94.0 \\ 
  $\hat\lambda^{(2)}_{\{1,2,3,4\}}$ & 95.0 \\ 
  \hline
  \multicolumn{2}{c}{(b)} \\
\end{tabular}
\end{tabular}
\caption{Coverage probabilities for $\hat\lambda^{(1)}_M$ and $\hat\lambda^{(2)}_M$ with $n = 1000$ over 1000 iterations.}
\label{coverageprobtab}
\end{table}
Table \ref{coverageprobtab} presents the estimated coverage probabilities obtained from 1000 iterations using a sample size of 1000 at a nominal 95\% confidence level. The results clearly indicate that the empirical coverage probabilities are very close to the targeted 95\%.

\section{Case Study: Application to Bearing Data}\label{casestudy}
For this case study, we utilize the IMS Bearing Data Set provided by the NSF I/UCR Center for Intelligent Maintenance Systems with support from Rexnord Corp. (see Lee et al., 2007 \cite{Lee2007}). The data comprises three separate test-to-failure experiments conducted using a test rig equipped with four Rexnord ZA-2115 double row bearings mounted on a shaft driven at a constant speed of 2000 RPM and subjected to a 6000 lb radial load. Vibration data were recorded using high-sensitivity PCB 353B33 ICP accelerometers placed on the bearing housings, two per bearing for the first dataset and one per bearing for the second and third. Each dataset consists of thousands of 1-second vibration signal snapshots (20,480 data points per file, sampled at 20 kHz), collected at regular 10-minute intervals. All bearings were force-lubricated, and failures occurred only after exceeding their designed lifetime. The setup suggests a plausible dependence among the bearings, as they are mounted on a shared shaft and subjected to the same operating conditions.

\begin{figure}[th]
    \centering
    \includegraphics[width=9.4cm, height=7cm]{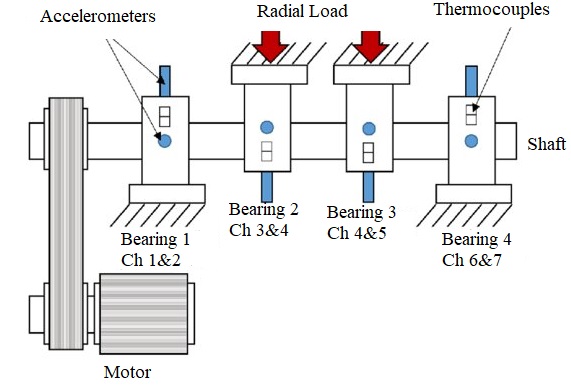}
    \caption{Test rig illustration, adapted from Ota and Kimura (2021) \cite{OtaKimura2021} and Lee et al. (2007) \cite{Lee2007}}
    \label{fig1}
\end{figure}
\subsection{Parameters Estimation}
In this study, we construct a 4-variate dataset by extracting one observation per bearing from each of the 2,156 files in Data Set 1, using channels 1, 3, 5, and 7 corresponding to Bearings 1 through 4, respectively.  Each observation corresponds to the first recorded vibration signal in a given file, resulting in a 4-dimensional vector at each time point. This dataset is defined as \( D = \{(x_{i1}, x_{i2}, x_{i3}, x_{i4}) : i = 1, \ldots, 2156\} \), where \( x_{ij} \) denotes the \(i^{th}\) vibration signal from the \( j^{th}\) channel.

Ota and Kimura (2021)\cite{OtaKimura2021} analyzed this construction using a 4-variate FGM copula. To model the marginal distributions, they applied the Kolmogorov–Smirnov test and identified the generalized t-distribution as the best fit, with density
\[
f_\delta(x) = \frac{\left[1 + \left(\frac{x - a}{b\sqrt{c}}\right)^2\right]^{-(c+1)/2}}{b\sqrt{c} \, B(c/2, 1/2)},
\]

Here, \( \delta = (a, b, c) \), and \( B \) denotes the beta function. The components of \( \delta \) for each channel were estimated {via maximum likelihood} using the full dataset \( D \), with the results reported in Table \ref{CDFestimates} below. These estimates are treated as known for our analysis.

\begin{table}[h!]
    \centering
    \renewcommand{\arraystretch}{1.2}
    \begin{tabular}{ccccccc}
    \toprule
          \(\delta\) & Ch 1 & Ch 3 & Ch 5 & Ch 7 \\
        \midrule
        \( {a} \) & -0.119 & -0.116 & -0.115 & -0.116 \\
         \( {b} \) & 0.0877 & 0.0905 & 0.103 & 0.0743 \\
     \( {c} \) & 16.0 & 26.8 & 8.28 & 4.73 \\
        \toprule
    \end{tabular}
    \caption{Estimates of \(\delta\) for different channels using D. Source: \cite{OtaKimura2021}}
    \label{CDFestimates}
\end{table}
We divide the dataset into three subsets: Data 1 consists of observations 1 to 1100, Data 2 includes observations 1101 to 2156, and Data 3 comprises the full dataset. Tables \ref{lambda2real} and \ref{lambda1real} present the estimated copula parameters from the extended 4-variate FGM copula model defined in \eqref{ExtendedDensity}, obtained using the estimator in \eqref{estimate}. The tables also report two-sided \(p\)-values for testing the null hypotheses \(H_0: \lambda^{(k)}_M = 0\) for each parameter, based on Data 3. In addition, Table \ref{lambda2real} includes the \(p\)-value for testing the hypothesis \(H_0: \Lambda^{(2)} = \mathbf{0}\), as described in Remark \ref{test}.

\begin{table}[ht]
\centering
\begin{tabular}{lccccccc}
  \toprule
  & \(\hat\lambda^{(1)}_{\{1,2\}}\) & \(\hat\lambda^{(1)}_{\{1,3\}}\) & \(\hat\lambda^{(1)}_{\{1,4\}}\) & \(\hat\lambda^{(1)}_{\{2,3\}}\) & \(\hat\lambda^{(1)}_{\{2,4\}}\) & \(\hat\lambda^{(1)}_{\{3,4\}}\) & \(\hat\lambda^{(1)}_{\{1,2,3\}}\) \\  
  \midrule
Data 1 & 0.160 & -0.030 & 0.038 & -0.150 & -0.120 & 0.090 & -0.001 \\  
Data 2 & 0.150 & -0.062 & -0.027 & -0.130 & -0.180 & 0.093 & -0.022 \\  
Data 3 & 0.156 & -0.046 & 0.006 & -0.138 & -0.153 & 0.091 & 0.010 \\  
\(p\)-value (Data 3) & 0.000 & 0.030 & 0.785 & 0.000 & 0.000 & 0.000 & 0.631 \\  
  \bottomrule
\end{tabular}

\vspace{0.5cm} 

\begin{tabular}{lccccc}
  \toprule
  & \(\hat\lambda^{(1)}_{\{1,2,4\}}\) & \(\hat\lambda^{(1)}_{\{1,3,4\}}\) & \(\hat\lambda^{(1)}_{\{2,3,4\}}\) & \(\hat\lambda^{(1)}_{\{1,2,3,4\}}\) \\  
  \midrule
Data 1 & -0.044 & -0.004 & -0.040 & 0.043 \\  
Data 2 & 0.031 & -0.040 & 0.034 & -0.028 \\  
Data 3 & -0.007 & -0.020 & -0.001 & 0.008 \\  
\(p\)-value (Data 3) & 0.736 & 0.359 & 0.952 & 0.685 \\  
  \bottomrule
\end{tabular}

\caption{Copula parameter estimates for \(\lambda_M^{(1)}\).}
\label{lambda1real}
\end{table}

\begin{table}[ht]
\centering
\begin{tabular}{lccccccl}
  \toprule
  & \(\hat\lambda^{(2)}_{\{1,2\}}\) & \(\hat\lambda^{(2)}_{\{1,3\}}\) & \(\hat\lambda^{(2)}_{\{1,4\}}\) & \(\hat\lambda^{(2)}_{\{2,3\}}\) & \(\hat\lambda^{(2)}_{\{2,4\}}\) & \(\hat\lambda^{(2)}_{\{3,4\}}\) & \(\hat\lambda^{(2)}_{\{1,2,3\}}\) \\  
  \midrule
Data 1 & -0.041 & 0.028 & 0.046 & 0.052 & 0.030 & 0.054 & 0.038 \\  
Data 2 & -0.051 & 0.079 & 0.047 & 0.040 & 0.012 & 0.045 & -0.036 \\  
Data 3 & -0.046 & 0.053 & 0.046 & 0.046 & 0.021 & 0.050 & 0.002 \\  
\(p\)-value (Data 3) & 0.030 & 0.015 & 0.033 & 0.034 & 0.33 & 0.022 & 0.939 \\  
  \bottomrule
\end{tabular}

\vspace{0.5cm} 

\begin{tabular}{lccccc}
  \toprule
  & \(\hat\lambda^{(2)}_{\{1,2,4\}}\) & \(\hat\lambda^{(2)}_{\{1,3,4\}}\) & \(\hat\lambda^{(2)}_{\{2,3,4\}}\) & \(\hat\lambda^{(2)}_{\{1,2,3,4\}}\) & \(H_0: \hat\Lambda^{(2)}=0\) vs \(H_1: \hat\Lambda^{(2)}\neq \mathbf{0}\)\\  
  \midrule
Data 1 & -0.056 & 0.021 & 0.009 & 0.022 &  \\  
Data 2 & -0.027 & 0.036 & 0.026 & -0.035 & \\  
Data 3 & -0.042 & 0.028 & 0.017 & -0.006 & \\  
\(p\)-value (Data 3) & 0.053 & 0.201 & 0.437 & 0.795 &0.005 \\  
  \bottomrule
\end{tabular}

\caption{Copula parameter estimates for \(\lambda_M^{(2)}\).}
\label{lambda2real}
\end{table}

The results in Tables \ref{lambda1real} and \ref{lambda2real} point out two important characteristics. First, the estimates of pairwise dependence parameters remain stable across the three temporal subsets, suggesting a persistent and consistent dependence structure over time. Second, only the pairwise interactions are statistically significant at the \(\alpha = 5\%\) level, except for \(\hat\lambda_{\{1,4\}}^{(1)}\) and \(\hat\lambda_{\{2,4\}}^{(2)}\), which do not show statistical significance. The stronger significance observed for adjacent pairs further emphasizes that the dependence structure is primarily local. The sign pattern observed in Table \ref{lambda1real} aligns with the results reported by Ota and Kimura (2021) \cite{OtaKimura2021}: positive association appears between bearings 1–2 and 3–4, negative association between 1–3, 2–3, and 2–4, while bearings 1–4 remain nearly independent. In contrast, all parameters corresponding to interactions involving three or more bearings are essentially zero, indicating that higher-order dependencies do not contribute meaningfully beyond the pairwise structure.

Table \ref{lambda2real} shows that, beyond the structure captured by the classical \(4\)-variate FGM copula, a residual dependence is present and accounted for by the second Legendre term in the extended model. While the classical parameters generally display stronger individual significance than the extended ones, the test of \(H_0: \Lambda^{(2)} = \mathbf{0}\) is rejected with a \(p\)-value of 0.005, confirming the contribution of the second-order component. A particularly relevant case is the pair \(\{1,4\}\): the corresponding parameter in the extended model,  \(\lambda_{1,4}^{(2)}\) is significant, suggesting that the assumption of near independence stated by only looking at \(\lambda_{1,4}^{(1)}\), between these bearings no longer holds . Furthermore, several variable pairs that already exhibit strong first-order dependence in Table \ref{lambda1real} also show significant second-order terms, indicating that the \(4\)-variate FGM copula alone is insufficient. \\
\subsection{Model Selection}
This analysis indicates that the density of a more suitable copula model for this dataset is \begin{equation}\label{suitable}
c(u_1,u_2,u_3,u_4)=1+\sum_{k=1}^{2}\,\mathbf v_k^\top {\Lambda^{(k)}}^* ,
\end{equation}
with
\[
{\Lambda^{(1)}}^*=(0.156,\,-0.046,\,0,\,-0.138,\,-0.153,\,0.091,\,0,\,0,\,0,\,0,\,0)\]
\[{\Lambda^{(2)}}^*=(-0.046,\,0.053,\,0.046,\,0.046,\,0,\,0.050,\,0,\,0,\,0,\,0,\,0).
\]
Zeros in these vectors are obtained by setting every coefficient whose two‑sided $p$‑value exceeds $0.05$ to zero.

On the other hand, the full 4‑variate FGM copula  is obtained by considering 
\[
{\Lambda^{(1)}}^{*}=(0.156,\,-0.046,\,0.006,\,-0.138,\,-0.153,\,0.091,\,0.010,\,-0.007,\,-0.020,\,-0.001,\,0.008),\] and \({\Lambda^{(2)}}^{*}=\mathbf 0
\).

Table \ref{AICcomparison} reports the Akaike and Bayesian Information Criteria.  
The extended copula achieves lower values for both AIC and BIC, indicating a better fit for the data.

\begin{table}[ht]
\centering
\begin{tabular}{lrr}
\toprule
Model & AIC & BIC\\
\midrule
\(4\)-variate FGM (\ref{classicalFGM})          & $-132.72$ & $-70.29$\\
\(4\)-varaite Extended FGM \ref{thedvariateFGMcopula} & $-161.55$ & $-106.79$\\
\bottomrule
\end{tabular}
\caption{Information‑criterion comparison for the two copula specifications.}
\label{AICcomparison}
\end{table}

Figure \ref{deviation} shows the deviation of the Rosenblatt‑transformed components (see Remark \ref{transformR}) to the  $\mathcal{U}(0,1)$ distribution, with the corresponding Kolmogorov–Smirnov \(p\)-values displayed in each panel.  
The large \(p\)-values indicate that the null hypothesis of uniformity cannot be rejected, confirming the fitted copula's adequacy.


\begin{figure}[htbp]
    \centering
    \includegraphics[width=17cm, height=10cm]{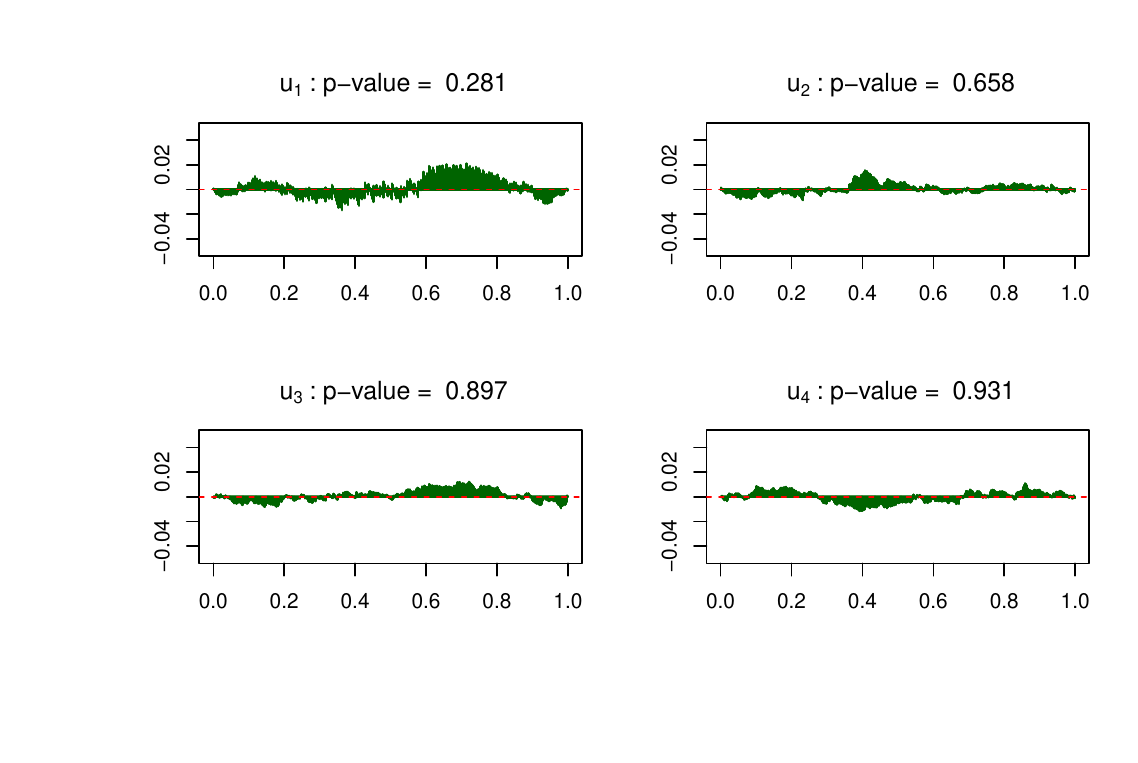}
    \caption{Deviation-from-uniformity plot for Rosenblatt-transformed variables. The horizontal line at zero indicates a perfect uniform fit.}
    \label{deviation}
\end{figure}
\section{Conclusion}
This paper develops an extended \(
d\)-variate FGM copula model based on Legendre polynomials and proposes explicit estimators for its parameters under the i.i.d setting. The estimators are shown to be unbiased, consistent, and asymptotically normal. A simulation study confirmed their finite-sample accuracy. Applying the model to the Bearing dataset revealed that the classical 4-variate FGM copula fails to capture certain residual dependencies. In contrast, the extended version, especially in its reduced form, demonstrates improved fit as supported by significance tests, Rosenblatt diagnostics, and information criteria.

\end{document}